\documentclass[10pt,draft]{amsart}
\usepackage{amsmath,amssymb,amsthm, color}

\begin{document}

\newtheorem{lem}{Lemma}[section]
\newtheorem{prop}{Proposition}[section]
\newtheorem{cor}{Corollary}[section]
\numberwithin{equation}{section}
\newtheorem{thm}{Theorem}[section]
\theoremstyle{remark}
\newtheorem{example}{Example}[section]
\newtheorem*{ack}{Acknowledgment}
\theoremstyle{definition}
\newtheorem{definition}{Definition}[section]
\theoremstyle{remark}
\newtheorem*{notation}{Notation}
\theoremstyle{remark}
\newtheorem{remark}{Remark}[section]
\newenvironment{Abstract}
{\begin{center}\textbf{\footnotesize{Abstract}}%
\end{center} \begin{quote}\begin{footnotesize}}
{\end{footnotesize}\end{quote}\bigskip}
\newenvironment{nome}
{\begin{center}\textbf{{}}%
\end{center} \begin{quote}\end{quote}\bigskip}

\newcommand{\triple}[1]{{|\!|\!|#1|\!|\!|}}
\newcommand{\xx}{\langle x\rangle}
\newcommand{\ep}{\varepsilon}
\newcommand{\al}{\alpha}
\newcommand{\be}{\beta}
\newcommand{\de}{\partial}
\newcommand{\la}{\lambda}
\newcommand{\La}{\Lambda}
\newcommand{\ga}{\gamma}
\newcommand{\del}{\delta}
\newcommand{\Del}{\Delta}
\newcommand{\sig}{\sigma}
\newcommand{\ome}{\Omega^n}
\newcommand{\Ome}{\Omega^n}
\newcommand{\C}{{\mathbb C}}
\newcommand{\N}{{\mathbb N}}
\newcommand{\Z}{{\mathbb Z}}
\newcommand{\R}{{\mathbb R}}
\newcommand{\T}{{\mathbb T}}
\newcommand{\Rn}{{\mathbb R}^{n}}
\newcommand{\Rnu}{{\mathbb R}^{n+1}_{+}}
\newcommand{\Cn}{{\mathbb C}^{n}}
\newcommand{\spt}{\,\mathrm{supp}\,}
\newcommand{\Lin}{\mathcal{L}}
\newcommand{\SSS}{\mathcal{S}}
\newcommand{\F}{\mathcal{F}}
\newcommand{\xxi}{\langle\xi\rangle}
\newcommand{\eei}{\langle\eta\rangle}
\newcommand{\xei}{\langle\xi-\eta\rangle}
\newcommand{\yy}{\langle y\rangle}
\newcommand{\dint}{\int\!\!\int}
\newcommand{\hatp}{\widehat\psi}
\renewcommand{\Re}{\;\mathrm{Re}\;}
\renewcommand{\Im}{\;\mathrm{Im}\;}

\title[Relativistic Hardy inequality]%
{Relativistic Hardy inequalities in magnetic fields}

\author{Luca Fanelli}
\address{Luca Fanelli: SAPIENZA Universit$\grave{\text{a}}$ di Roma, Dip. di Matematica "G. Castelnuovo". P.le A.
Moro 5, 00185, Roma, Italy}
\email{fanelli@mat.uniroma1.it}
\author{Luis Vega}
\address{Luis Vega: Universidad del Pa\'is Vasco, Dep. de Matem\'aticas. Apartado 644, 48080, Bilbao, Spain \& Basque Center for Applied Mathematics (BCAM), Alameda de Mazarredo 14, 48009, Bilbao, Spain}
\email{luis.vega@ehu.es -- lvega@bcamath.org}
\author{Nicola Visciglia}
\address{Nicola Visciglia: Universit$\grave{\text{a}}$ Degli Studi di Pisa, Dip. di Matematica "L. Tonelli".
Largo Bruno Pontecorvo 5 I - 56127 Pisa. Italy}
\email{viscigli@dm.unipi.it}

\subjclass[2000]{35J10, 35L05.}
\keywords{Dirac equation, electromagnetic potentials, Hardy inequalities}

\begin{abstract}
We deal with Dirac operators with external homogeneous magnetic fields. Hardy-type inequalities related to these operators are investigated: for a suitable class of transversal magnetic fields, we prove a Hardy inequality with the same best constant as in the free case. This leaves naturally open an interesting question whether  there exist magnetic fields for which a Hardy inequality with a better constant than the usual one, in connection with the well known diamagnetic phenomenon arising in non-relativistic models.
\end{abstract}

\thanks{
The first and third authors were supported by the Italian project FIRB 2012: "Dispersive dynamics: Fourier Analysis and Variational Methods"}

\date{\today}
\maketitle


\section{Introduction}\label{sec:intro}
The Hardy inequality 
\begin{equation}\label{eq:hardy0}
  \int_{\R^n}\frac{|\phi|^2}{|x|^2}\,dx
  \leq
  \left(\frac{2}{n-2}\right)^2\int_{\R^n}|\nabla\phi|^2\,dx
  \qquad
  (n\geq3)
\end{equation}
with $\psi\in\mathcal C^{\infty}_0(\R^n;\C)$,
is one of the well known mathematical aspects of the uncertainty principle in Quantum Mechanics.
Among several applications, a standard consequence of \eqref{eq:hardy0} is the positivity of quadratic forms of the type
\begin{equation*}
  q(\varphi,\psi)=\int_{\R^n}\nabla\varphi\cdot\nabla\overline{\psi}\,dx
  -\lambda\int_{\R^n}\frac{\varphi\overline{\psi}}{|x|^2}\,dx
\end{equation*}
for $\lambda\leq(n-2)^2/4$,
which permits to study the self-adjointness of Schr\"odinger hamiltonians like $H=-\Delta-\lambda/|x|^2$ by means of the Kato-Rellich Theorem.

One can prove by the same techniques the more general family of inequalities
\begin{equation}\label{eq:hardy1}
  \int_{\R^n}\frac{|\phi|^2}{|x|^\alpha}\,dx
  \leq
  \left(\frac{2}{n-\alpha}\right)^2\int_{\R^n}|x|^{2-\alpha}|\nabla\phi|^2\,dx
  \qquad
  (n\geq1)
  \qquad
  (\alpha<n)
\end{equation}
In fact, the operator $\nabla$ at the right-hand side of \eqref{eq:hardy1} can be replaced by the radial derivative $\partial_r=\frac{x}{|x|}\cdot\nabla$, since the weight $|x|^{-\alpha}$ is radial; more precisely, one has 
\begin{equation}\label{eq:hardy2}
  \int_{\R^n}\frac{|\phi|^2}{|x|^\alpha}\,dx
  \leq
  \left(\frac{2}{n-\alpha}\right)^2\int_{\R^n}|x|^{2-\alpha}|\partial_r\phi|^2\,dx
  \qquad
  (n\geq1)
  \qquad
  (\alpha<n)
\end{equation}
In addition, the constant $4/(n-\alpha)^2$ at the right-hand side of \eqref{eq:hardy1}, \eqref{eq:hardy2} is sharp, and it is well known that there are no maximizing functions for those inequalities.

When a particle interacts with an external magnetic field, it is standard in Quantum Mechanics to introduce in the model an anti-symmetric real-valued matrix $B=\{B^{jk}(x)\}=\{-B^{kj}(x)\}:\R^n\to\mathcal M_{n\times n}(\R)$, $j,k=1,\dots,n$, $n\geq2$, with the following property: there exist a real valued potential vector field $A=\left(A^1(x),\dots,A^n(x)\right)$ such that $B=DA-DA^t$, where $(DA)^{ij}=\partial_{x_i}A^j$ is the differential matrix of $A$. Then, to obtain the new Schr\"odinger hamiltonian formulation, one changes the gradient $\nabla$ into $\nabla_A:=\nabla+iA$, so that $-\Delta_A=-(\nabla_A)^2$.
In dimensions $n=2,3$, any anti-symmetric matrix can be identified with a scalar function $(n=2)$ or a vector-field $(n=3)$, hence we have, by the previous definitions, that $B=\text{curl}\,A$.

A quite important feature of the magnetic gradient $\nabla_A$ is the diamagnetic inequality (see e.g. \cite{LL}): if $A\in_{L^2_{\text{loc}}}(\R^n)$, $n\geq2$, then 
\begin{equation}\label{eq:diam}
  \left|\nabla|\psi(x)|\right|\leq\left|\nabla_A\psi(x)\right|
\end{equation}
for all $\psi\in\mathcal C^\infty_0(\R^n)$ and almost every $x\in\R^n$. This, together with \eqref{eq:hardy1} applied to $|\psi|$, immediately gives 
\begin{equation}\label{eq:hardy3}
  \int_{\R^n}\frac{|\phi|^2}{|x|^\alpha}\,dx
  \leq
  \left(\frac{2}{n-\alpha}\right)^2\int_{\R^n}|x|^{2-\alpha}|\nabla_A\phi|^2\,dx
  \qquad
  (n\geq2)
  \qquad
  (\alpha<n)
\end{equation}
where the constant at the right-hand side is not bigger than the one in the free case. 
It is easy to show again that one can put at the right-hand side of \eqref{eq:hardy3} just the radial component of the magnetic gradient, namely
\begin{equation}\label{eq:hardy33}
  \int_{\R^n}\frac{|\phi|^2}{|x|^\alpha}\,dx
  \leq
  \left(\frac{2}{n-\alpha}\right)^2\int_{\R^n}|x|^{2-\alpha}|\partial_r^A\phi|^2\,dx
  \qquad
  (n\geq2)
  \qquad
  (\alpha<n)
\end{equation}
holds,
where $\partial_r^A:=\frac{x}{|x|}\cdot\nabla_A$.
 
In \cite{BLS}, \cite{LW} it is proved that, for suitable magnetic fields $B$, 
inequality \eqref{eq:hardy3} can be generally strictly improved with respect to the free case. 
A relevant example is given by the Aharonov-Bohm potential in 2D: it is the case
\begin{equation*}
  A_{ab}(x,y)=\lambda\left(-\frac{y}{x^2+y^2},\frac{x}{x^2+y^2}\right),
  \qquad
  B_{ab}(x,y)=\text{curl}\,A=8\pi\delta
\end{equation*}
with $\lambda\in\R$. Laptev and Weidl proved in \cite{LW} that
\begin{equation}\label{eq:hardyLW}
  \int_{\R^2}\frac{|\phi|^2}{|x|^2}\,dx\leq
  \Gamma\int_{\R^2}|\nabla_{A_{ab}}\phi|^2\,dx
\end{equation}
with $\Gamma=\left(\text{dist}\{\Theta,\Z\}\right)^{-2}$, being $\Theta$ the total flux of $A_{ab}$ along the unit sphere $\mathbb S^1$. Notice that inequality \eqref{eq:hardyLW} is false in the free case, since in dimension $n=2$ the weight $|x|^{-2}$ is too singular. Nevertheless, as soon as $\Theta\notin\Z$, \eqref{eq:hardyLW} becomes true with a finite constant $\Gamma$ at the right-hand side.

This manuscript is concerned with the same kind of questions in the relativistic setting of Dirac-Pauli operators.
We denote by
\begin{equation*}
  \sigma\cdot\nabla_A \phi= \sum_{j=1}^3 \sigma_j (\partial_{x_j} +i A_j)\phi
\end{equation*}
where $\phi=\phi(x)=(\phi^1(x),\phi^2(x)):\R^3 \rightarrow \C^2$ and the Pauli matrices $\sigma_j$ are defined by
\begin{equation}\label{eq:sigma}
  \sigma_1=
  \left(
  \begin{array}{cc}
  0 & 1 \\ 1 & 0
  \end{array}
  \right),
  \qquad
  \sigma_2=
  \left(
  \begin{array}{cc}
  0 & -i \\ i & 0
  \end{array}
  \right),
  \qquad
  \sigma_3=
  \left(
  \begin{array}{cc}
  1 & 0 \\ 0 & -1
  \end{array}
  \right)
\end{equation}
The anti commutation relations
\begin{equation*}
  \sigma_j\sigma_k+\sigma_k\sigma_j=
  \begin{cases}
 2I_{2\times2}
  \qquad
  \text{if }j=k
  \\
  \mathbf{0}
  \qquad\quad\ \ \ 
  \text{if } j\neq k
  \end{cases},
  \qquad
  I_{2\times2}=
  \left(
  \begin{array}{cc}
  1 & 0 \\ 0 & 1
  \end{array}
  \right)
\end{equation*}
are the main feature of the matrices $\sigma_j$; in particular, they give
\begin{equation}\label{eq:ovvia}
  \int_{\R^3}\left|\sigma\cdot\nabla\phi\right|^2\,dx=
  \int_{\R^3}|\nabla\phi|^2\,dx=:
  \int_{\R^3}|\nabla\phi^1|^2\,dx
  +\int_{\R^3}|\nabla\phi^2|^2\,dx
\end{equation}
(see also \cite{T} for more details)
so that by \eqref{eq:hardy0} one immediately obtains
\begin{equation}\label{eq:hardy5}
\int_{\R^3}\frac{|\phi|^2}{|x|^2}\,dx:=
\int_{\R^3}\frac{|\phi^1|^2}{|x|^2}\,dx
+\int_{\R^3}\frac{|\phi^2|^2}{|x|^2}\,dx
\leq
4 \int_{\R^3}\left|\sigma\cdot \nabla \phi\right|^2\,dx
\end{equation}
Notice that 
\begin{equation*}
  \int_{\R^3}|x|^\alpha\left|\sigma\cdot\nabla\phi\right|^2\,dx
  \neq
  \int_{\R^3}|x|^\alpha|\nabla\phi|^2\,dx
\end{equation*}
if $\alpha\neq0$, so that obtaining the relativistic analog of \eqref{eq:hardy1} is not trivial.

The case $\alpha=1$ is of particular interest, in connection with the problem of self-adjointness of Dirac operators with an external Coulomb-type potential. It is the case of the following inequality
\begin{equation}\label{eq:hardy6}
   \int_{\R^3}\frac{|\phi|^2}{|x|}\,dx\leq \int_{\R^3}|x|\left|\sigma\cdot\nabla\phi\right|^2\,dx
\end{equation}
From now on, we shall refer to \eqref{eq:hardy6}
as to the Hardy-Dirac type inequality. 

In \cite{DELV},
a completely analytical proof of estimate \eqref{eq:hardy6} has been 
performed for the first time, and later by the same techniques some more general versions have been obtained in \cite{A, DDEV, DEL, DEL2} (see also the references therein). Notice that the best constant of the inequality
is $C=1$, in complete analogy with \eqref{eq:hardy1}, for $\alpha=1$ and $n=3$. 

The aim of this paper is to investigate the validity of the following inequalities
\begin{equation}\label{eq:hardy}
   \int_{\R^3}\frac{|\phi|^2}{|x|}\,dx\leq C \int_{\R^3}|x|\left|\sigma\cdot\nabla_A\phi\right|^2\,
dx
\end{equation}
when $A:\R^3\to\R^3$ is a suitable homogeneous magnetic potential, with a particular interest in studying the behavior of the best constant $C$. 

Before preparing the setting of our main results, we motivate here the interest for such a question. First of all,
we recall the following well known Barry Simon's version of the diamagnetic inequality:
\begin{equation}\label{eq:diamB}
  e^{-t(\Delta+V(x))}\left|f\right|\leq
  \left|e^{-t(\Delta_A+V(x))}f\right|
\end{equation}
if $A\in L^2_{\text{loc}}(\R^n)$, $n\geq2$ which was first conjectured in \cite{S1}. 
Inequality \eqref{eq:diamB} implies, among many other things, that the bottom of the spectrum of an electromagnetic Schr\"odinger operator $-\Delta_A+V$ increases with respect to the same quantity in the magnetic-free case, namely
\begin{equation}\label{eq:unidia}
  \inf\text{spec}(-\Delta_A+V)\geq\inf\text{spec}(-\Delta+V)
\end{equation} 
in $L^2(\R^n)$, $n\geq2$. The same phenomenon (universal diamagnetism) does not seem to arise in relativistic models. 
More precisely, in \cite{HSS}, an interesting conjecture about universal paramagnetism for fermions was claimed, which for Dirac-Pauli operators can be written as
\begin{equation}\label{eq:unipara}
 \inf\text{spec}(-\Delta_A+\sigma\cdot B+V)\leq\inf\text{spec}(-\Delta+V),
\end{equation} 
with $n=3$ and $B=\text{curl}\,A$.
Moreover, in \cite{HSS} it is proved that the claim is true in the case of a constant magnetic field $B = (0,0,\lambda)$, $\lambda\in\R$. Later on, Avron and Simon in \cite{AS} disproved the conjecture with an explicit example. See also the interesting surveys \cite{E, H} for more informations about the topic.

The flavour is that inequalities \eqref{eq:unidia}, \eqref{eq:unipara} should be directly related to the behavior of the best constant in the Hardy inequalities \eqref{eq:hardy3}, \eqref{eq:hardy}, respectively, once a magnetic perturbation comes into play.

Motivated by the result by Avron and Simon, we wish here to show a general class of non-trivial magnetic fields for which the best constant in \eqref{eq:hardy} is the same as in the free case, namely $C=1$, in contrast with the paramagnetic phenomenon. The argument which we show in the sequel does not permit us to find example of fields for which \eqref{eq:hardy} holds with a better constant, which is a quite interesting open question.

Before stating our main result we need to introduce the {\it orbital angular momentum} operator, which is the triplet of operators
\begin{equation}\label{eq:1momfree}L = (L^1, L^2, L^3)
 =x\wedge(-i\nabla)
 \end{equation}
and the {\it magnetic orbital angular momentum}
\begin{equation}\label{eq:1mom}L_A = (L_A^1, L^2_A, L^3_A)
 =x\wedge(-i\nabla_A)=x\wedge(-i\nabla+A)
 \end{equation}
Notice that the operators $L,L_A$ 
acts in principle on $\C^2$-valued functions defined on $\R^3$. However
it is also well-defined its action on $\C^2$-valued functions defined on $\mathbb S^2$ (consider the homogeneous extension
on the whole $\R^3$ of a given function defined on $\mathbb S^2$, apply the operator and restrict on $\mathbb S^2$).
The corresponding operators depend only on the 
trace of the field $A$ on $\mathbb S^2$. Hence in the sequel, given any potential $A:\R^3\to\R^3$, we will denote by $L_A$ both the symmetric operator on $L^2(\mathbb S^2; \mathbb C^2)$
and the one on $L^2(\R^3; \mathbb C^2)$. 

A fundamental role is played by the
{\it spin-orbit} angular momentum $\sigma\cdot L+1$; for any
vector field $A$, its natural generalization is given by 
$\sigma \cdot L_{A}+1$. Recall that $\sigma_i$ are defined in \eqref{eq:sigma}.\\
The operator $\sigma\cdot L_{ A}$ is symmetric and its inverse on $L^2(\mathbb S^2;\C^2)$ is compact. 
Hence it has a purely discrete and real spectrum, which can accumulate only at infinity; we denote it by
\begin{equation}\label{eq:spec}
\begin{cases}
\text{spec}\left(\sigma\cdot {L}_{A}\right):=\{-\lambda_j,\mu_j\}_{j\in \N}
\\
0<\lambda_1<\lambda_2<\dots;
\qquad
0\leq\mu_1<\mu_2<\dots
\end{cases}
\end{equation}
In particular it is well defined the number 
\begin{equation}\label{eq:mu1}
\mu_1(A)=\inf \left\{\text{spec} (\sigma \cdot{L}_{A}+1) \cap [0, \infty)\right\}
\end{equation}
In the magnetic-free case ${A}\equiv0$, the spectrum of $\sigma\cdot L+1$ is completely known, and it is given by the set $\{\pm1,\pm2,\dots\}$. This gives $\mu_1(0)=1$, $\mu_1$ being the number in \eqref{eq:mu1}, which turns out to be the fundamental tool in the proof of \eqref{eq:hardy6} of \cite{DELV}.
In fact, thanks to this remark, in \cite{DELV} the stronger estimate
\begin{equation}\label{eq:stronger}
\int_{\R^3}\frac{|\phi|^2}{|x|}\,dx
   \leq
   \int_{\R^3} \frac1{|x|}\left|\left(\sigma\cdot L+1\right)\phi\right|^2 \,dx
   \leq
   \int_{\R^3}|x|\left|\sigma\cdot\nabla\phi\right|^2\,dx
\end{equation}
is proved, providing a weighted $L^2$-bound for the spin-orbit angular momentum in terms of the whole Dirac operator.

In the following, we use the polar notations $r=|x|,\ \omega=x/|x|\in\mathbb S^2$.
We will point our attention on magnetic fields of the form
\begin{equation*}
    B(x)=\varphi(r) \nabla g(\omega)\wedge x,
\end{equation*}
where $\varphi=\varphi(r):\R^3\to\R$, and $g=g(\omega):\mathbb S^2\to\R$ is a homogenous function of degree $0$. These kinds of fields are obviously tangential, and possibly singular at the origin, since $\nabla g$ is a homogeneous function of degree $-1$. As we will see in the sequel (Proposition \ref{prop:gauge} below), up to assuming some local integrability conditions on $\varphi$ (assumption \eqref{eq:sing}), it is a possible to prove that there exists a potential $A$ such that $\text{curl}\,A=B$.
In particular, we will prove that for those potentials one has $\text{spec} (\sigma \cdot{L}_{A}+1)
=\text{spec} (\sigma \cdot{L}+1)$, and the corresponding eigenfunctions are just obtained by the free ones, via multiplication by a purely imaginary phase (Proposition \ref{prop:gauge2} below).

We are now ready to state the main result of this paper.
\begin{thm}\label{thm:dir2}
  Let $\varphi=\varphi(r):\R^3\to\R$, $g=g(\omega):\mathbb S^2\to\R$ be a homogenous function of degree $0$, and denote by
  \begin{equation}\label{eq:condizione}
    B(x)=\varphi(r) \nabla g(\omega)\wedge x.
  \end{equation} 
  Assume in addition that
  \begin{equation}\label{eq:sing}
    \left|\int_0^rs\varphi(s)\,ds\right|<\infty,
  \end{equation}
  for all $r\in\R$. Moreover, let $A:\R^3\to\R^3$ be such that $\text{curl}\,A=B$.
  Then, for any $\phi=\phi(x)\in\mathcal C^{\infty}_0(\R^3;\C^2)$, the following inequality holds
\begin{equation}\label{eq:hardy7}
   \int_{\R^3}\frac{|\phi|^2}{|x|}\,dx
   \leq
   \int_{\R^3} \frac1{|x|}\left|\left(\sigma\cdot L_A+1\right)\phi\right|^2 \,dx
   \leq
   \int_{\R^3}|x|\left|\sigma\cdot\nabla_A\phi\right|^2\,dx.
\end{equation}
\end{thm}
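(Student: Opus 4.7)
The plan is to reduce both inequalities in Theorem~\ref{thm:dir2} to a one-dimensional weighted Hardy estimate via a spectral decomposition adapted to $\sigma\cdot L_A$. Both sides of \eqref{eq:hardy7} are invariant under the gauge change $A\mapsto A+\nabla\chi$, $\phi\mapsto e^{-i\chi}\phi$, so thanks to Proposition~\ref{prop:gauge} I would fix the tangent representative $A(x)=-\bigl(\int_0^{r}s\varphi(s)\,ds\bigr)\nabla g(\omega)$, which is globally defined because of \eqref{eq:sing} and satisfies $\omega\cdot A=0$ since $g$ is homogeneous of degree zero. This property is crucial, since in this gauge the magnetic radial derivative coincides with the ordinary $\partial_r$.

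In this gauge the Pauli product formula $(\sigma\cdot a)(\sigma\cdot b)=a\cdot b+i\sigma\cdot(a\wedge b)$, combined with $L_A=L+x\wedge A$ and $\omega\cdot A=0$, yields the pointwise identity
$(\sigma\cdot\omega)(\sigma\cdot\nabla_A)=\partial_r-\tfrac{1}{r}\sigma\cdot L_A$; since $(\sigma\cdot\omega)^2=I$ this gives $|\sigma\cdot\nabla_A\phi|^2=|\partial_r\phi-\tfrac{1}{r}\sigma\cdot L_A\phi|^2$. By Proposition~\ref{prop:gauge2}, the operator $\sigma\cdot L_A$ on $L^2(\mathbb S^2;\C^2)$ has the same spectrum as $\sigma\cdot L$, with an orthonormal eigenbasis $\{\Psi_k\}$ satisfying $(\sigma\cdot L_A+1)\Psi_k=\nu_k\Psi_k$ for some $\nu_k\in\Z\setminus\{0\}$.

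Expanding $\phi(r\omega)=\sum_k a_k(r)\Psi_k(\omega)$ and using Parseval on $\mathbb S^2$, the bound $\nu_k^2\geq 1$ delivers the first inequality in \eqref{eq:hardy7} at each fixed $r$, and then multiplication by $r$ and integration (since $dx=r^2\,dr\,d\omega$) closes it.

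For the right inequality, the same expansion together with the identity above gives
$\int_{\R^3}|x||\sigma\cdot\nabla_A\phi|^2\,dx=\sum_k\int_0^\infty r^3\bigl|a_k'(r)-\tfrac{\nu_k-1}{r}a_k(r)\bigr|^2\,dr$.
For each $k$, expanding the modulus, integrating by parts once in the cross term (using $2\,\mathrm{Re}(\bar a_k a_k')=(|a_k|^2)'$) and simplifying $2(\nu_k-1)+(\nu_k-1)^2=\nu_k^2-1$ reduces the summand to $\int_0^\infty r^3|a_k'|^2\,dr+(\nu_k^2-1)\int_0^\infty r|a_k|^2\,dr$. The scalar weighted Hardy inequality $\int_0^\infty r|a_k|^2\,dr\leq\int_0^\infty r^3|a_k'|^2\,dr$, which is the case $n=3$, $\alpha=1$ of \eqref{eq:hardy2} applied radially, bounds this below by $\nu_k^2\int_0^\infty r|a_k|^2\,dr$; summing in $k$ yields the right estimate. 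The main obstacle is really the algebraic and spectral setup: without the gauge with $\omega\cdot A=0$ there would be an additional $iA_r$ term in Step~2 that obstructs the separation of variables, and without the integer-valued spectrum of $\sigma\cdot L_A+1$ one would lose the bound $\nu_k^2\geq 1$ that makes the sharp constant $C=1$ survive.
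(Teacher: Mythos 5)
Your reduction of the right-hand inequality to a one-dimensional problem contains a genuine gap. The claimed identity
\begin{equation*}
\int_{\R^3}|x|\,|\sigma\cdot\nabla_A\phi|^2\,dx=\sum_k\int_0^\infty r^3\Bigl|a_k'(r)-\tfrac{\nu_k-1}{r}a_k(r)\Bigr|^2\,dr
\end{equation*}
presupposes that the eigenbasis $\{\Psi_k\}$ of $\sigma\cdot L_A$ on $L^2(\mathbb S^2;\C^2)$ can be chosen independently of $r$, so that $\partial_r$ hits only the coefficients $a_k$. For the fields of the theorem this fails: in your gauge the tangential trace $x\wedge A=-\bigl(\int_0^r s\varphi(s)\,ds\bigr)\,x\wedge\nabla g(\omega)$ depends on $r$ through the scalar factor, so $\sigma\cdot L_A$ is really a one-parameter family of operators on $L^2(\mathbb S^2;\C^2)$, conjugate to $\sigma\cdot L$ by the $r$-dependent phase $e^{2i\eta}$ with $\eta=\frac12\bigl(\int_0^r s\varphi(s)\,ds\bigr)g(\omega)$. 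Its eigenfunctions are $\Psi_k^{(r)}=e^{2i\eta}\Psi_k^{\mathrm{free}}$, and $\partial_r\Psi_k^{(r)}=i\,r\varphi(r)g(\omega)\,\Psi_k^{(r)}$, which, because of the factor $g(\omega)$, is not proportional to $\Psi_k^{(r)}$ and mixes the modes. The cross terms you drop amount exactly to $\int_{\R^3}r\,\langle(\sigma\cdot B)\phi,\phi\rangle\,dx$: apply identity \eqref{eq:identity} in your gauge, where $A_r=0$ kills one anomalous term but $\partial_r(x\wedge A)=rB(x)\neq0$. This is a Zeeman-type term with no definite sign, so the inequality does not follow. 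In short, the $iA_r$ obstruction you gauge away reappears as the $r$-dependence of the angular eigenbasis; no choice of gauge removes it.

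The paper handles precisely this difficulty, and by a different mechanism: it does not diagonalize. It expands $r\bigl|\partial_r^A\phi-\frac1r\sigma\cdot L_A\phi\bigr|^2$ directly and integrates by parts (Proposition \ref{prop:identity}), which produces the two anomalous terms $\langle\sigma\cdot[\partial_r(x\wedge A)]\phi,\phi\rangle$ and $\langle\sigma\cdot(x\wedge\nabla A_r)\phi,\phi\rangle$; the gauge \eqref{eq:gauge} is chosen so that these cancel identically (identity \eqref{eq:gauge2}), at the price of a nonzero $A_r$, and then the magnetic radial Hardy inequality \eqref{eq:hardy33} and the spectral bound \eqref{eq:sigmafin} conclude. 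Your left-hand inequality and your final one-dimensional algebra are fine (the bound $\nu_k^2\geq1$ on each fixed sphere does follow from the phase conjugation, though Proposition \ref{prop:gauge2} is stated for the gauge \eqref{eq:gauge} and must be rerun with the phase $2\eta$ for yours); it is the right-hand inequality that requires the cancellation mechanism rather than separation of variables.
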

\begin{remark}
  Notice that condition \eqref{eq:sing} does not allow to consider too much singular magnetic fields $B$. Since $\nabla g(\omega)\wedge x$ is homogeneous of degree 0, the validity of \eqref{eq:sing} requires on $B$ a local behavior like $|B(x)|\sim1/|x|^{2-\epsilon}$, for some $\epsilon>0$. On the other hand, nothing is required about the behavior of $B$ when $r\to\infty$.
\end{remark}
\begin{remark}
  The result of Theorem \ref{thm:dir2} is gauge invariant, since the hypotheses and inequality \eqref{eq:hardy7} do not depend on the choice of the potential $A$ such that $\text{curl}\,A=B$. As we see in the following, in order to prove \eqref{eq:hardy7} it is fundamental to choose an appropriate gauge for the potential $A$. Moreover we remark again that condition \eqref{eq:sing} permits to prove that $B$ is in fact a curl of some potential $A$, which will be chosen in a suitable gauge (see Proposition \ref{prop:gauge2} below).
\end{remark}
\begin{remark}
  Theorem \ref{thm:dir2} shows a class of non-trivial magnetic fields $B$, with the corresponding potentials $A$,
for which the best constant 1 of inequality \eqref{eq:hardy7} coincides with the one in the free case \eqref{eq:hardy6}, in the same spirit as in the example by Avron and Simon \cite{AS} remarked above.
\end{remark}
\begin{remark}\label{rem:examples}
  A quite interesting example of magnetic field for which Theorem \ref{thm:dir2} applies is given by
  \begin{equation}\label{eq:exampleB}
    B(x,y,z)=\lambda r^{\alpha}(-y,x,0)
    =\lambda r^{\alpha+1}\left[\nabla\left(\frac zr\right)\wedge x\right],
  \end{equation}
  with $\lambda\neq0$ and $r:=\left(x^2+y^2+z^2\right)^{\frac12}$. Following the notations in \eqref{eq:condizione} we have $\varphi(r)=\lambda r^{\alpha+1}$ and $g(\omega)=z/r$; moreover, condition \eqref{eq:sing} impose that $\alpha>-3$.
\end{remark}
\begin{remark}
The question whether or not there exist a magnetic field $B$ 
for which inequality \eqref{eq:hardy7}, with the corresponding potential, holds with a constant better than 1, in analogy with what happens in the non relativistic case \eqref{eq:hardyLW}, still remains open. In order to address an answer it should be fundamental to produce examples of magnetic potentials such that $\mu_1(A)>1$ in the definition \eqref{eq:mu1}.
\end{remark}


\section{Preliminaries}\label{sec:prelim}
We start with some preliminary remarks which will be used in the sequel, in the proof of our main Theorem \ref{thm:dir2}. We first need to fix a suitable gauge in which to work. We prove the following proposition.
\begin{prop}[Gauge choice]\label{prop:gauge}
 Let $B=B(x):\R^3\to\R^3$ be of the form \eqref{eq:condizione} and assume \eqref{eq:sing}. Define $A=A(x):\R^3\to\R^3$ as follows:
 \begin{equation}\label{eq:gauge}
   A(x) = \frac12\varphi(r)g(\omega)x-\frac12\left(\int_0^rs\varphi(s)\,ds\right)\nabla g(\omega).
 \end{equation}
 Then
 \begin{align}
 &
 \text{curl}\, A = B 
 \label{eq:curla}
 \\
 &
 \partial_r\left(x\wedge A\right)+x\wedge\nabla\left(A\cdot \frac xr\right)=0,
 \label{eq:gauge2}
 \end{align}
 where the radial derivative $\partial_r:=\frac xr\cdot\nabla$ acts component-wise on the vector $x\wedge A$.
\end{prop}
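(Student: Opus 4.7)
The plan is to split $A=A_1+A_2$, where $A_1(x):=\tfrac12\varphi(r)g(\omega)\,x$ and $A_2(x):=-\tfrac12 h(r)\nabla g(\omega)$ with $h(r):=\int_0^r s\varphi(s)\,ds$, and then verify \eqref{eq:curla} and \eqref{eq:gauge2} by direct vector calculus. Assumption \eqref{eq:sing} enters only to ensure $h(r)$, and hence $A$, is well defined locally, legitimizing the computations below. The whole proof rests on Euler's identity for homogeneous functions: since $g$ has degree $0$, one has $x\cdot\nabla g(\omega)=0$; and since $\nabla g$ has degree $-1$, one has $\partial_r\nabla g(\omega)=-\tfrac{1}{r}\nabla g(\omega)$.

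For \eqref{eq:curla} I apply the identity $\text{curl}(fX)=\nabla f\wedge X + f\,\text{curl}\,X$ to each piece. For $A_1$: since $\text{curl}\,x=0$ and $\nabla(\varphi(r)g(\omega))=\varphi'(r)g(\omega)\tfrac{x}{r}+\varphi(r)\nabla g(\omega)$, the radial term dies against $x$ (because $x\wedge x=0$) and leaves $\text{curl}\,A_1=\tfrac12\varphi(r)\,\nabla g(\omega)\wedge x$. For $A_2$: since $\text{curl}\,\nabla g=0$, only $\text{curl}\,A_2=-\tfrac12 h'(r)\tfrac{x}{r}\wedge\nabla g(\omega)$ survives, and using $h'(r)=r\varphi(r)$ together with the antisymmetry of $\wedge$ this also equals $\tfrac12\varphi(r)\,\nabla g(\omega)\wedge x$. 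Adding yields $B$.

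For \eqref{eq:gauge2} I compute each summand directly. Euler's identity gives $\nabla g(\omega)\cdot x=0$, hence $A\cdot\tfrac{x}{r}=\tfrac{r}{2}\varphi(r)g(\omega)$; taking the gradient and wedging with $x$ kills again the radial piece, leaving $x\wedge\nabla(A\cdot\tfrac{x}{r})=\tfrac{r\varphi(r)}{2}\,x\wedge\nabla g(\omega)$. For the other term, $x\wedge x=0$ reduces $x\wedge A$ to $-\tfrac12 h(r)\,x\wedge\nabla g(\omega)$; a one-line Euler computation shows $\partial_r(x\wedge\nabla g(\omega))=\omega\wedge\nabla g-\tfrac{1}{r}x\wedge\nabla g=0$, so $\partial_r(x\wedge A)=-\tfrac12 h'(r)\,x\wedge\nabla g(\omega)=-\tfrac{r\varphi(r)}{2}\,x\wedge\nabla g(\omega)$. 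The two contributions cancel.

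Neither step is genuinely difficult; the only place that demands care is the bookkeeping of the Euler identities in their correct normalization (degrees $0$ and $-1$ for $g$ and $\nabla g$ respectively) and checking that each term arising from $\nabla(fg)$ is either radial, and therefore killed by the wedge with $x$, or produces exactly the factor $r\varphi(r)=h'(r)$ needed for the cancellations. Once those are handled, both identities drop out as formal algebraic consequences.
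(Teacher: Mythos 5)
Your proof is correct and follows essentially the same route as the paper: split $A$ into its two summands, use that $x\cdot\nabla g=0$ and that $x\wedge\nabla g(\omega)$ is homogeneous of degree $0$ (so its radial derivative vanishes), and check that both sides of \eqref{eq:gauge2} reduce to $\pm\frac12 r\varphi(r)\,x\wedge\nabla g(\omega)$. The only cosmetic difference is that you invoke Euler's identity explicitly where the paper simply cites homogeneity.
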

\begin{proof}
  The proof is quite elementary. First, compute
  \begin{align*}
    \text{curl}\,\varphi(r)g(\omega)x 
    &
    = \varphi(r)\nabla g(\omega)\wedge x = B(x)
    \\
    \text{curl}\,\left[\left(\int_0^rs\varphi(s)\,ds\right)\nabla g(\omega)\right]
    &
    =r\varphi(r)\frac xr\wedge\nabla g(\omega)
    =-B(x),
  \end{align*}
  which proves \eqref{eq:curla}. Now notice that
  \begin{equation*}
    x\wedge A = -\frac12\left(\int_0^rs\varphi(s)\,ds\right)x\wedge\nabla g(\omega);
  \end{equation*}
  since $x\wedge\nabla g(\omega)$ is homogeneous of degree 0, we have $\partial_r(x\wedge\nabla g(\omega))\equiv0$ and consequently
  \begin{equation}\label{eq:00}
    \partial_r\left(x\wedge A\right)
    =-\frac12 r\varphi(r)x\wedge\nabla g(\omega)=\frac12rB(x).
  \end{equation}
  Then, compute $A\cdot \frac xr=\frac12\varphi(r)g(\omega)r$, to obtain
  \begin{equation}\label{eq:000}
     x\wedge\nabla\left(A\cdot \frac xr\right)
     =
     \frac12r\varphi(r)x\wedge\nabla g(\omega)=-\frac12rB(x).
  \end{equation}
  Identities \eqref{eq:00} and \eqref{eq:000} complete the proof of \eqref{eq:gauge2}.
\end{proof}
We now need a further simple remark about the spectral properties of $\sigma\cdot L_A+1$.
\begin{prop}\label{prop:gauge2}
  Let $A=A(x):\R^3\to\R^3$ be defined as in \eqref{eq:gauge}, assuming \eqref{eq:sing}.
  Moreover, denote by 
  \begin{equation*}
    \eta(x):=\frac12\left(\int_0^rs\varphi(s)\,ds\right)g(\omega).
   \end{equation*} 
Then,  $L_A\left(e^{i\eta}\phi\right)=e^{i\eta}L\phi$, where $L,L_A$ are defined by \eqref{eq:1momfree} and \eqref{eq:1mom}. In particular, the spectra of $\sigma\cdot L_{A}+1$ and $\sigma\cdot
L_{A}+1$ on $L^2(\mathbb S^2;\C^2)$ coincide, and consequently
  \begin{equation}\label{eq:sigmafin}
    \left\|\left(\sigma\cdot L_{ A}+1\right)\phi\right\|_{L^2(\mathbb S^2;\C^2)}
    \geq\|\phi\|_{L^2(\mathbb S^2;\C^2)}.
  \end{equation}
\end{prop}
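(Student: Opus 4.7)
The plan is to recognise $e^{i\eta}$ as a gauge transformation that intertwines $L_A$ with $L$, and then transfer the spectral identity from $L$ to $L_A$ by unitary equivalence. The crucial input is that $\eta$ has been chosen precisely so that $\nabla\eta + A$ is radial. Setting $F(r) := \int_0^r s\varphi(s)\,ds$, a direct chain-rule calculation using $F'(r) = r\varphi(r)$ and the homogeneity of $g$ gives
\begin{equation*}
\nabla\eta \;=\; \tfrac12\varphi(r)g(\omega)\,x \;+\; \tfrac12 F(r)\,\nabla g(\omega).
\end{equation*}
Comparing with \eqref{eq:gauge}, the $F(r)\nabla g(\omega)$ terms cancel while the $\varphi(r)g(\omega)x$ terms add, so that $A + \nabla\eta = \varphi(r)g(\omega)\,x$, which is parallel to $x$. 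The coefficients $\frac12$ in \eqref{eq:gauge} have been rigged exactly for this to happen, and this identity is the whole content of the proposition.

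Next, from the product rule $(-i\nabla + A)(e^{i\eta}\phi) = e^{i\eta}(-i\nabla + A + \nabla\eta)\phi$, I take the wedge product with $x$ on both sides; the previous step kills the $x\wedge(A+\nabla\eta)\phi$ contribution, leaving the pointwise intertwining
\begin{equation*}
L_A(e^{i\eta}\phi) \;=\; e^{i\eta} L\phi.
\end{equation*}
Multiplying component-wise by $\sigma$ and adding the identity gives $(\sigma\cdot L_A + 1)(e^{i\eta}\phi) = e^{i\eta}(\sigma\cdot L + 1)\phi$.

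Restricted to $\mathbb S^2$, $\eta$ becomes the real-valued function $\frac12 F(1)g(\omega)$ (finite by \eqref{eq:sing}), so multiplication by $e^{i\eta}$ is a unitary operator on $L^2(\mathbb S^2;\C^2)$. The intertwining therefore identifies $\sigma\cdot L_A + 1$ as unitarily equivalent to $\sigma\cdot L + 1$ on $L^2(\mathbb S^2;\C^2)$, so their spectra coincide. Since the spectrum of $\sigma\cdot L + 1$ is $\{\pm 1,\pm 2,\dots\}$, every eigenvalue has modulus at least $1$, and the spectral theorem immediately delivers \eqref{eq:sigmafin}. There is essentially no obstacle beyond recognising the correct phase $\eta$, which is already prescribed in the statement; the only subtlety is that $e^{i\eta}$ must be well defined on $\mathbb S^2$, which is exactly what the local integrability hypothesis \eqref{eq:sing} guarantees.
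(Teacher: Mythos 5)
Your proposal is correct and follows essentially the same route as the paper, which simply asserts that the intertwining identity "can be easily checked via explicit computations" and then deduces \eqref{eq:sigmafin} from the free case; you supply precisely those computations, namely that $\nabla\eta+A=\varphi(r)g(\omega)x$ is parallel to $x$ so that the wedge with $x$ annihilates it. The only added value of your write-up is that it makes explicit the unitary equivalence on $L^2(\mathbb S^2;\C^2)$ via the restricted phase $\tfrac12\left(\int_0^1 s\varphi(s)\,ds\right)g(\omega)$, which the paper leaves implicit.
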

\begin{proof}
  The proof is quite immediate. The identity  $L_A\left(e^{i\eta}\phi\right)=e^{i\eta}L\phi$ can be easily checked via explicit computations. Moreover, \eqref{eq:sigmafin} immediately follows by its analog in the free case $A\equiv0$ (see e.g. \cite{T}) and the fact that the the spectra of $\sigma\cdot L_{A}+1$ and $\sigma\cdot
L_{ A}+1$ on $L^2(\mathbb S^2; \mathbb C^2)$ coincide.
\end{proof}
We now have all the ingredients which we can use to prove Theorem \ref{thm:dir2}.

 
\section{Hardy-Dirac Inequalitites: proof of Theorem \ref{thm:dir2}}\label{sec:hardy}
Let us start by proving an identity.
\begin{prop}\label{prop:identity}
Let $A:\R^3\to\R^3$. 
For any $\phi=\phi(x)\in\mathcal C^{\infty}_0(\R^3;\C^2)$, the following identity holds
\begin{align}\label{eq:identity}
\int_{\R^3} r\left|\sigma\cdot\nabla_A\phi\right|^2
&
=
\int_{\R^3} r\left|\partial_r^A\phi\right|^2 \,dx\\
\nonumber &
+\int_{\R^3} r\left|\frac1{r}\left(\sigma\cdot L_A+1\right)\phi\right|^2 \,dx
-\int_{\R^3} \frac{|\phi|^2}{r} \,dx
\\
\nonumber
& 
+\int_{\R^3}\langle\sigma\cdot[\partial_r(x\wedge A)]\phi,\phi\rangle \,dx
+\int_{\R^3}\left\langle\sigma\cdot\left(x\wedge\nabla A_r\right)\phi,\phi\right\rangle \,dx
\end{align}
where $r:=|x|$, $\partial_r^A:=\frac xr\cdot\nabla_A$, and $A_r:=A\cdot \frac xr$.
\end{prop}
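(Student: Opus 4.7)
The approach rests on the polar decomposition of the magnetic Dirac operator. Using the Pauli algebra identity $(\sigma\cdot a)(\sigma\cdot b)=a\cdot b+i\sigma\cdot(a\wedge b)$ with $a=\omega:=x/r$ and $b=\nabla_A$, together with the definition $L_A=x\wedge(-i\nabla_A)$ (which gives $\omega\wedge\nabla_A=(i/r)L_A$) and the fundamental identity $(\sigma\cdot\omega)^2=I$, I would first establish the pointwise operator identity
\begin{equation*}
\sigma\cdot\nabla_A \;=\; (\sigma\cdot\omega)\Bigl[\partial_r^A-\frac{1}{r}\sigma\cdot L_A\Bigr].
\end{equation*}
Squaring, using again $(\sigma\cdot\omega)^*(\sigma\cdot\omega)=I$ to kill $\sigma\cdot\omega$ from the norm, and multiplying by $r$, yields
\begin{equation*}
r|\sigma\cdot\nabla_A\phi|^2 = r|\partial_r^A\phi|^2+\frac{1}{r}|\sigma\cdot L_A\phi|^2 -2\,\mathrm{Re}\,\langle\partial_r^A\phi,\sigma\cdot L_A\phi\rangle.
\end{equation*}
The diagonal terms are already in the desired shape; the whole content of the proposition is to rewrite the cross term.

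For the cross term I split $\partial_r^A=\partial_r+iA_r$. Consider first the $\partial_r$ piece. Using the product rule on $\partial_r\langle\phi,\sigma\cdot L_A\phi\rangle$, together with the decomposition $L_A=L+x\wedge A$ and the crucial fact that $L$ is purely angular (so $[\partial_r,L]=0$), the only commutator contribution comes from $\partial_r$ acting on the multiplication operator $x\wedge A$. Integration by parts in $r$ using $\int\partial_r f\,dx=-2\int f/r\,dx$ (the standard identity in spherical coordinates on $\R^3$) and the self-adjointness of $\sigma\cdot L_A$ on $L^2(\R^3;\C^2)$ then converts this piece into
\begin{equation*}
2\int\frac{\langle\phi,\sigma\cdot L_A\phi\rangle}{r}\,dx + \int\langle\sigma\cdot[\partial_r(x\wedge A)]\phi,\phi\rangle\,dx.
\end{equation*}

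For the $iA_r$ piece, the key observation is that while $\int\langle\phi,\sigma\cdot L_A\phi\rangle\,dx$ is real, the weighted integral $\int A_r\langle\phi,\sigma\cdot L_A\phi\rangle\,dx$ has a non-trivial imaginary part given by $\tfrac{1}{2i}\langle\phi,[A_r,\sigma\cdot L_A]\phi\rangle_{L^2}$. Since $A_r$ commutes with the multiplication $x\wedge A$, only the free angular momentum $L$ contributes to the commutator; a one-line computation via the product rule gives $[A_r,L_k]=i(x\wedge\nabla A_r)_k$ and hence $[A_r,\sigma\cdot L_A]=i\,\sigma\cdot(x\wedge\nabla A_r)$. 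This produces the second correction term $\int\langle\sigma\cdot(x\wedge\nabla A_r)\phi,\phi\rangle\,dx$.

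Finally I combine. Adding $2\int\langle\phi,\sigma\cdot L_A\phi\rangle/r\,dx$ to the diagonal $\int|\sigma\cdot L_A\phi|^2/r\,dx$ and adding and subtracting $\int|\phi|^2/r\,dx$ completes the square
\begin{equation*}
\frac{|\sigma\cdot L_A\phi|^2+2\mathrm{Re}\langle\phi,\sigma\cdot L_A\phi\rangle+|\phi|^2}{r}=\frac{|(\sigma\cdot L_A+1)\phi|^2}{r},
\end{equation*}
leaving the stated identity. The main obstacle is purely bookkeeping of signs through the complex-conjugation, self-adjointness, and commutator manipulations — especially in the $iA_r$ step, where the imaginary part has to be tracked through the conjugation conventions consistently, so that the two correction terms appear with matching signs (this is what makes the gauge condition \eqref{eq:gauge2} of Proposition~\ref{prop:gauge} kill them both simultaneously in the proof of Theorem~\ref{thm:dir2}). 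There is no analytic subtlety: all boundary terms vanish because $\phi\in\mathcal C^\infty_0(\R^3;\C^2)$, and the weights $r$ and $1/r$ against smooth compactly supported functions cause no integrability issue.
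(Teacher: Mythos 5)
Your proposal is correct and follows essentially the same route as the paper: the polar factorization $\sigma\cdot\nabla_A=(\sigma\cdot\tfrac{x}{r})\bigl(\partial_r^A-\tfrac1r\sigma\cdot L_A\bigr)$, the unitarity of $\sigma\cdot\tfrac{x}{r}$, integration by parts in $r$ (producing the $2/r$ term and $\sigma\cdot[\partial_r(x\wedge A)]$), the commutator $[A_r,\sigma\cdot L]$ yielding $\sigma\cdot(x\wedge\nabla A_r)$, and the completion of the square with $\sigma\cdot L_A+1$. The only difference is cosmetic bookkeeping (you phrase the cross terms via real/imaginary parts rather than writing out both sesquilinear pairings), so no further comment is needed.
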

\begin{proof}
Let us recall the orthogonal decompositions
\begin{align*}
  \nabla 
  &
  = \frac xr\partial_r-\frac xr\wedge \left(\frac xr\wedge \nabla\right)
  \\
  iA
  &
  = i\frac xrA_r-i\frac xr\wedge\left(\frac xr\wedge A\right),
\end{align*}
where $\partial_r:=\frac xr\cdot\nabla$.
Then, denoting by $\partial_r^A=\frac xr\cdot\nabla_A=\partial_r+iA_r$, we can write
\begin{equation}\label{eq:dec1}
  \nabla_A = \nabla+iA = \frac xr\partial_r^A-\frac xr\wedge\left(\frac xr\wedge \nabla_A\right)
\end{equation}
Notice that, since $x/r$ and $x/r\wedge \nabla_A$ are orthogonal, we have by \eqref{eq:dec1} that
\begin{equation}\label{eq:pitagora}
   |\nabla_A\phi|^2 = |\partial_r^A\phi|^2+\left|\frac xr\wedge \nabla_A\phi\right|^2
   =
   |\partial_r^A\phi|^2+\frac1{r^2}\left|L_A\phi\right|^2.
\end{equation}
We recall the anti commutation rules
\begin{equation}\label{eq:3mom}
  \sigma_2\sigma_3=i\sigma_1,
  \qquad
  \sigma_3\sigma_1=i\sigma_2,
  \qquad
  \sigma_1\sigma_2=i\sigma_3,
  \qquad
  \sigma_j\sigma_k+\sigma_k\sigma_j=2\delta_j^kI,
\end{equation}
and the useful formula
\begin{equation}\label{eq:formula}
(\sigma\cdot F)(\sigma\cdot G) = F\cdot G+i\sigma\cdot(F\wedge G),
\end{equation}
which follows by \eqref{eq:3mom}.
By \eqref{eq:1mom}, \eqref{eq:formula} and \eqref{eq:dec1} one easily obtains that
\begin{equation}\label{eq:dec2}
   \sigma\cdot\nabla_A
   =
   \left(\sigma\cdot \frac xr\right)
   \left(\partial_r^A-\frac1r\sigma\cdot L_{A}\right)
\end{equation}
In addition, due to the anti-commutation rules \eqref{eq:3mom}
 one has
$\left|\left(\sigma\cdot \frac xr\right)F\right|^2=|F|^2$, for any vector $F\in\C^2$. Hence, we can compute by \eqref{eq:dec2}:
\begin{align}\label{eq:parti1}
&\int_{\R^3} r\left|\sigma\cdot\nabla_A\phi\right|^2 \,dx
=\int_{\R^3} r\left|\partial_r^A\phi-\frac 1r\sigma\cdot L_{A}\phi\right|^2 \,dx
\\
\nonumber
& 
=
\int_{\R^3} r\left|\partial_r\phi+iA_r\phi-\frac 1r\sigma\cdot L\phi-\sigma\cdot\left(\frac xr\wedge A\right)\phi\right|^2 \,dx
\\
\nonumber
&
=\int_{\R^3} r\left|\partial_r^A\phi\right|^2 \,dx
+\int_{\R^3}\frac1{r}\left|\sigma\cdot L_A\phi\right|^2 \,dx\\
\nonumber
&
-\int_{\R^3} \left\langle\partial_r\phi,\sigma\cdot L\phi\right\rangle \,dx
-\int_{\R^3} \left\langle\sigma\cdot L\phi,\partial_r\phi\right\rangle \,dx
\\
\nonumber
& \ \ \ 
-\int_{\R^3} \left\langle\partial_r\phi,\sigma\cdot (x\wedge A)\phi\right\rangle \,dx
-\int_{\R^3} \left\langle\sigma\cdot (x\wedge A)\phi,\partial_r\phi\right\rangle \,dx
\\
\nonumber
& \ \ \ 
-\int_{\R^3}\left\langle iA_r\phi,\sigma\cdot L\phi\right\rangle \,dx
-\int_{\R^3}\left\langle\sigma\cdot L\phi,iA_r\phi\right\rangle \,dx
\end{align}
where the brackets $\langle\cdot,\cdot\rangle$ denote the sesquilinear product in $\C^2$, and we used the fact that $\langle iA_r\phi,\sigma\cdot(x\wedge A)\phi\rangle+\langle\sigma\cdot(x\wedge A)\phi\
,iA_r\phi \rangle=0$.
Now notice that $[\partial_r,\sigma\cdot L]=0$ and $\sigma\cdot L$ is a symmetric operator; therefore, integrating by parts we obtain
\begin{align}\label{eq:a}
&-\int_{\R^3} \left\langle\partial_r\phi,\sigma\cdot L\phi\right\rangle \,dx
-\int_{\R^3} \left\langle\sigma\cdot L\phi,\partial_r\phi\right\rangle \,dx
=
\\
\nonumber
&
-\int_{\R^3}\partial_r\langle\sigma\cdot L\phi,\phi\rangle \,dx
=
\int_{\R^3}\frac2r\langle\sigma\cdot L\phi,\phi\rangle \,dx
\end{align}
Analogously, we can write
\begin{align*}
&
\int_{\R^3} \left\langle\partial_r\phi,\sigma\cdot (x\wedge A)\phi\right\rangle \,dx
+\int_{\R^3} \left\langle\sigma\cdot (x\wedge A)\phi,\partial_r\phi\right\rangle \,dx
\\
& 
=
\int_{\R^3}\partial_r\langle\sigma\cdot(x\wedge A)\phi,\phi\rangle \,dx
-\int_{\R^3}\langle\sigma\cdot[\partial_r(x\wedge A)]\phi,\phi\rangle \,dx
\end{align*}
where the radial derivative $\partial_r$ acts component-wise on the vector $x\wedge A$. Consequently, by integration by parts we get
\begin{align}
\label{eq:b}
&
-\int_{\R^3} \left\langle\partial_r\phi,\sigma\cdot (x\wedge A)\phi\right\rangle \,dx
-\int_{\R^3} \left\langle\sigma\cdot (x\wedge A)\phi,\partial_r\phi\right\rangle \,dx
\\
\nonumber
& 
=
\int_{\R^3}\frac 2r\langle\sigma\cdot(x\wedge A)\phi,\phi\rangle \,dx
+\int_{\R^3} \langle\sigma\cdot[\partial_r(x\wedge A)]\phi,\phi\rangle \,dx
\end{align}
Hence the sum of \eqref{eq:a} and \eqref{eq:b} gives
\begin{align}
\label{eq:c}
&
-\int_{\R^3} \left\langle\partial_r\phi,\sigma\cdot L\phi\right\rangle \,dx
-\int_{\R^3} \left\langle\sigma\cdot L\phi,\partial_r\phi\right\rangle \,dx
\\
\nonumber
&
-\int_{\R^3} \left\langle\partial_r\phi,\sigma\cdot (x\wedge A)\phi\right\rangle \,dx
-\int_{\R^3} \left\langle\sigma\cdot (x\wedge A)\phi,\partial_r\phi\right\rangle \,dx
\\
\nonumber
& 
=
\int_{\R^3} \frac 2r\langle\sigma\cdot L_A\phi,\phi\rangle \,dx
+\int_{\R^3}\langle\sigma\cdot[\partial_r(x\wedge A)]\phi,\phi\rangle \,dx
\end{align}
For the last two terms in \eqref{eq:parti1}, since $\sigma\cdot L$ is symmetric and $A_r$ commutes with the $\sigma_j's$ we easily compute
\begin{align}\label{eq:d}
&\int_{\R^3}\left\langle iA_r\phi,\sigma\cdot L\phi\right\rangle \,dx
+\int_{\R^3}\left\langle\sigma\cdot L\phi, iA_r\phi\right\rangle \,dx
\\
\nonumber
&
=
i\int_{\R^3} \left\langle\sigma\cdot L(A_r\phi),\phi\right\rangle \,dx
-i\int_{\R^3} \left\langle A_r\sigma\cdot L\phi,\phi\right\rangle \,dx
\\
\nonumber
&
=
\int_{\R^3} \left\langle\sigma\cdot\left(x\wedge\nabla A_r\right)\phi,\phi\right\rangle \,dx
\end{align}
Combining \eqref{eq:parti1}, \eqref{eq:c} and \eqref{eq:d} it now turns out that
\begin{align*}
\int_{\R^3} r\left|\sigma\cdot\nabla_A\phi\right|^2 \,dx
&
=
\int_{\R^3} r\left|\partial_r^A\phi\right|^2 \,dx
+\int_{\R^3}\frac1{r}\left|\sigma\cdot L_A\phi\right|^2 \,dx
+\int_{\R^3}\frac 2r\langle\sigma\cdot L_A\phi,\phi\rangle \,dx
\\
\nonumber
& \ \ \ 
+\int_{\R^3}\langle\sigma\cdot[\partial_r(x\wedge A)]\phi,\phi\rangle \,dx
+\int_{\R^3}\left\langle\sigma\cdot\left(x\wedge\nabla A_r\right)\phi,\phi\right\rangle \,dx
\\
\nonumber
& 
=
\int_{\R^3} r\left|\partial_r^A\phi\right|^2\,dx
+\int_{\R^3}\frac1{r}\left|\left(\sigma\cdot L_A+1\right)\phi\right|^2 \,dx
-\int_{\R^3}\frac{|\phi|^2}{r} \,dx
\\
\nonumber
& \ \ \ 
+\int_{\R^3}\langle\sigma\cdot[\partial_r(x\wedge A)]\phi,\phi\rangle \,dx
+\int_{\R^3}\left\langle\sigma\cdot\left(x\wedge\nabla A_r\right)\phi,\phi\right\rangle \,dx
\end{align*}
which proves \eqref{eq:identity}.
\end{proof}
Theorem \ref{thm:dir2} now follows as an immediate corollary of Proposition \ref{prop:identity}, thanks to the Hardy inequality
  \begin{equation*}
     \int_{\R^3}\frac{|\phi|^2}{|x|}\,dx\leq\int_{\R^3}|x|\left|\partial_A^r\phi\right|^2\,dx
  \end{equation*}
  (which is \eqref{eq:hardy33} with $\alpha=1$, $n=3$), and Propositions \ref{prop:gauge}, \ref{prop:gauge2}.

\end{document}